\def\BibTeX{{\rm B\kern-.05em{\sc i\kern-.025em b}\kern-.08em
    T\kern-.1667em\lower.7ex\hbox{E}\kern-.125emX}}
\newtheorem{thm}{Theorem}
\newtheorem{mydefn}{Definition}
\newtheorem{lem}{Lemma}
\newtheorem{prob}{Problem}
\newtheorem{reform}{Reformulation}
\newtheorem{assm}{Assumption}
\let\NAT@parse\undefined
\newcommand{\R}{\mathbb{R}}
\newcommand{\N}{\mathbb{N}}
\renewcommand{\P}{\mathbb{P}}
\newcommand{\pr}[1]{\mathbb{P}\left(#1\right)}
\newcommand{\ex}[1]{\mathbb{E}\left[#1\right]}
\newcommand{\var}[1]{\mathrm{Var}\left(#1\right)}
\newcommand{\std}[1]{\mathrm{Std}\left(#1\right)}
\newcommand{\bvec}[1]{\vec{\boldsymbol{#1}}}
\newcommand{\Nt}[2]{\mathbb{N}_{[#1,#2]}}
\author{Shawn Priore and Meeko Oishi
    \thanks{
        This material is based upon work supported by the National Science Foundation under NSF Grant Numbers CMMI-2105631.  Any opinions, findings, and conclusions or recommendations expressed in this material are those of the authors and do not necessarily reflect the views of the National Science Foundation.  
        \newline \indent Shawn Priore and Meeko Oishi are with the Department of Electrical and Computer Engineering, University of New Mexico, Albuquerque, NM; e-mail: \texttt{shawnpriore@unm.edu} (corresponding author)\texttt{, oishi@unm.edu}.
    }
}
\title{Chance Constrained Stochastic Optimal Control for Linear Systems with a Time Varying Random Control Matrix}
\begin{document}

\maketitle

\begin{abstract}
This work proposes an open loop methodology to solve chance constrained stochastic optimal control problems for linear systems with a stochastic control matrix. We consider a joint chance constraint for polytopic time-varying target sets under moment and unimodality assumptions. We reformulate the chance constraint into individual biconvex constraints using the one-sided Vysochanskij–Petunin inequality. We demonstrate our methodology on two spacecraft rendezvous problems. We compare the proposed method with the scenario approach and moment-based methods based on Cantelli's inequality.
\end{abstract}

\section{Introduction}

It is well known in the satellite operations community that system modeling inaccuracies tend to result in underestimations of fuel consumption. Anecdotally, operators have had to allot up to an additional 50\% of the predicted fuel use to account for these inaccuracies. In many cases, these inaccuracies arise when dynamic systems are modeled under the assumption that the control matrix is perfectly known and the input is perfectly executed. In practice, inaccuracies in the control can stem from under-performing actuators, non-impulsive transitions between control inputs, or time delays in sensing-control algorithms. Further, the level of inaccuracy can vary over time. As is the case when a time delay in the sensing-control algorithms is caused by process queuing in the processor as the number of processes queued prior to the algorithm can be random. As inaccuracies like these can be challenging to model, we propose modeling these inaccuracies as stochastic elements in the control matrix. By adding stochasticity in the control matrix, we can model inaccuracies that result from control actions in a manner that is not sensitive to the actual cause of the stochasticity. 

Stochastic control matrices present challenges that have not been well addressed in previous literature. Previous work has considered the presence of random time delays between sensing and control actuation as in \cite{Branicky2000, Huo2005}. The introduction of a random time delay can be particularly useful in scenarios where impulsive control assumptions are impractical. However, the methods rely on filtering techniques that don't always allow for probabilistic assurances. Robust model predictive control methods have been posed for linear parameter varying systems where bounded random parameters are present in the control and state matrices \cite{Casavola2000, Calafiore2011, Gravell2021}. Under the assumption of boundedness, robust models can be used to derive stabilization results and optimal feedback controllers \cite{Mahmoud1999, Calafiore2012}. The reliance on boundedness can limit the applicability of the method, particularly when bounds are uncertain. Both of these methodologies focus solely on a particular type of uncertainty and the proposed solutions lack generalizable results. Little work has been presented to address stochastic control matrices outside of linear parameter varying and time delay systems.

With how scarce fuel is for on orbit satellites, techniques that can better account for inaccuracies and better evaluate fuel consumption needs can be invaluable tools for operators. To address concerns of limited resources and random elements in the control matrix, we consider this problem in a stochastic optimal control framework. Within the stochastic optimal control framework, we consider probabilistic joint chance constraints for time-varying polytopic target sets while minimizing a convex cost function such as fuel usage. To solve the stochastic optimal control problem, our approach employs a biconvex approximation of the polytopic target set chance constraints where solutions can be found via convex optimization techniques. We rely on Boole's inequality \cite{casella2002} to bound the joint chance constraint by a series of individual chance constraints. Then the individual chance constraints are reformulated into biconvex constraints with the one-sided Vysochanskij–Petunin inequality \cite{Mercadier2021} under assumptions of known moments and unimodality. The one-sided Vysochanskij–Petunin inequality is a tightening of Cantelli's inequality \cite{casella2002} for unimodal distributions and is currently the tightest moment based bound available for unimodal constraints. So, while the one-sided Vysochanskij–Petunin inequality will introduce conservatism to the solution, it will also enable optimization under a wide range of stochastic assumptions. To address biconvexity in our reformulations, we discuss the alternate convex search method \cite{Leeuw1994} to find solutions. {\em The main contribution of this paper is the construction of a tractable optimization problem that solves for convex joint chance constraints in the presence of random elements in the control matrix.} 

The paper is organized as follows. Section \ref{sec:prelim} provides mathematical preliminaries and formulates the optimization problem. Section \ref{sec:methods} derives the reformulation of the chance constraints with Boole's inequality and the one-sided Vysochanskij–Petunin inequality. Section \ref{sec:results} demonstrates our approach on two spacecraft rendezvous problems, and Section \ref{sec:conclusion} provides concluding remarks.

\section{Preliminary and Problem Setup} \label{sec:prelim}

We denote random variables with bold case, $\bvec{x}$ or $\boldsymbol{x}$, regardless of dimension. For a random variable $\boldsymbol{x}$, we denote the expectation as $\ex{\boldsymbol{x}}$, standard deviation as $\std{\boldsymbol{x}}$, and variance as $\var{\boldsymbol{x}}$. For a matrix $A$, the operator $\mathrm{vec}(A)$ vertically concatenates the columns of $A$ into a column vector. For two matrices $A$ and $B$, we denote the Kronecker product as $A \otimes B$. For matrix entries $A_1, \ldots, A_m$, we denote a  block diagonal matrix constructed with these elements as $\mathrm{blkdiag}(A_1, \ldots, A_m)$. We denote an identity matrix of size $n$ as $I_n$ and a $n$-dimensional vector of ones as $\vec{1}_n$.

\subsection{Problem Formulation}

We consider a discrete-time linear system given by
\begin{equation} \label{eq:dynamics}
    \bvec{x}(k+1) = A \bvec{x}(k) + \boldsymbol{B}(k) \vec{u}(k) 
\end{equation}
with state $\bvec{x}(k) \in \mathcal{X} \subseteq \R^n$, input $\vec{u}(k) \in \mathcal{U} \subseteq \R^m$, and time index $k \in \Nt{0}{N}$. We assume initial conditions $\vec{x}(0)$ are known, and the admissible control set $\mathcal{U}$ is convex. The control matrix $\boldsymbol{B}(k)$ is time-varying and real valued with probability space $(\Omega, \mathcal{B}(\Omega), \P_{\boldsymbol{B}})$ with outcomes $\Omega$, Borel $\sigma$-algebra $\mathcal{B}(\Omega)$, and probability measure $\P_{\boldsymbol{B}}$ \cite{casella2002}. 

We write the dynamics at time $k$ as an affine combination of the initial condition and the concatenated control sequence,
\begin{equation} \label{eq:lin_dynamics}
    \bvec{x}(k) = A^k \vec{x}(0) + \mathcal{A}(k) \boldsymbol{\mathcal{C}}\vec{U}
\end{equation}
with 
\begin{subequations}
\begin{alignat}{2}
    \vec{U} =& 
    \begin{bmatrix} 
        \vec{u}(0)^\top & 
        \ldots & 
        \vec{u}(N-1)^\top 
    \end{bmatrix}^\top &&\in \mathcal{U}^{N} \\
    {\mathcal{A}}(k) = & \begin{bmatrix} A^{k-1} & \ldots & A & I_n &  0_{n \times (N\!-\!k\!-\!1)m} \end{bmatrix} && \in \R^{n \times Nn}  \\
    \boldsymbol{\mathcal{C}} = &\;  \mathrm{blkdiag}\left( \boldsymbol{B}(0), \ldots, \boldsymbol{B}(N-1) \right) && \in \R^{Nn \times Nm} 
\end{alignat}
\end{subequations}    

We seek to minimize a convex performance objective 
\begin{equation}
    J: \mathcal{X}^{N} \times \mathcal{U}^{N} \rightarrow \R
\end{equation}
such as fuel cost. We presume the state must stay within time varying polytopic sets, represented by the half-space inequalities $\vec{G}_{ki} \bvec{x}(k) \leq h_{ki}$,  with a desired likelihood
\begin{equation}\label{eq:constraint_t}
        \pr{\bigcap_{k=1}^N \bigcap_{i=1}^{q_k} \vec{G}_{ki} \bvec{x}(k) \leq h_{ki} }  \geq  1-\alpha
\end{equation}
where $q_k$ is the number of linear inequalities at time $k$. We presume convex, compact, and polytopic sets $ \left\{ \bvec{x}(k) \middle| \cap_{i=1}^{q_k} \vec{G}_{ki} \bvec{x}(k) \leq h_{ki} \right\} \subseteq \R^n$, and probabilistic violation threshold $\alpha <  1/6$. 

\begin{mydefn}[Unimodal Distribution \cite{Bertin1997}]\label{def:unimodal}
A unimodal distribution is a distribution whose cumulative distribution function is convex in the region $(-\infty, a)$ and concave in the region $(a, \infty)$ for some $a \in \R$.
\end{mydefn}
\begin{assm}\label{assm:finite_moment}
Each random element of $\boldsymbol{B}(k)$ has a finite expectation and variance. 
\end{assm}
\begin{assm}\label{assm:unimodal}
Each random variable, $\vec{G}_{ki} \bvec{x}(k)$, marginally follows a unimodal distribution. 
\end{assm}

\begin{mydefn}[Strong Unimodal Distribution \cite{Bertin1997}] \label{defn:strong_unimodal}
A strong unimodal distribution is one in which unimodality is preserved by convolution.
\end{mydefn}

Assumptions \ref{assm:finite_moment} and \ref{assm:unimodal} will guarantee a closed form reformulation of the polytopic set chance constraints by means of the one-sided Vysochanskij-Petunin inequality. Assumption \ref{assm:unimodal} is the more restrictive of the two assumptions as unimodality can be challenging to show analytically. In rare cases, unimodality can be verified analytically by properties of strong unimodality. One method to check for strong unimodality is to verify that the probability density function is log concave \cite{Ibragimov1956}. For example, beta random variables with both parameters greater than or equal to $1$, gamma random variables with shape parameter greater than or equal to $1$, Gaussian, Laplacian, and exponential distributions are all strong unimodal. As a result, any affine summation of these random variables will always be unimodal.

However, the easiest method to validate unimodality is by randomly sampling the random control matrix and plotting the solutions applied to the random samples. By computing the empirical cumulative distribution function of the chance constraint with a large enough sample size, one can validate unimodality by checking concavity and searching for a single inflection point with a series of affine functions.
 
Finally, we write the optimization problem we seek to solve,
\begin{subequations}\label{prob:big_prob_eq}
    \begin{align}
        \underset{\vec{U}}{\mathrm{minimize}} \quad & J\left(
        \bvec{x}(1), \ldots, \bvec{x}(N), \vec{U}\right)  \\
        \mathrm{subject\ to} \quad  & \vec{U} \in \mathcal{U}^N,  \\
        & \text{Dynamics } \eqref{eq:dynamics} \text{ with }
        \vec{x}(0)\\
        & \text{Probabilistic constraint \eqref{eq:constraint_t}} \label{prob:initial_eq_prob_constraints} 
    \end{align}
\end{subequations}

\begin{prob} \label{prob:1}
    Under Assumptions \ref{assm:finite_moment}-\ref{assm:unimodal}, solve the stochastic optimization problem \eqref{prob:big_prob_eq} with open loop control $\vec{U} \in \mathcal{U}^N$, and probabilistic violation threshold $\alpha$.
\end{prob}
The main challenge in solving Problem \ref{prob:1} is assuring \eqref{prob:initial_eq_prob_constraints}. The interaction of the random elements of the control matrix and the control input makes enforcing the constraints challenging. Even if closed form expressions exist for the joint chance constraint's probability, changes in the control input can change the shape and concavity of the distribution and expression of the constraint.  

\section{Methods} \label{sec:methods}

To solve Problem \ref{prob:1}, we reformulate the joint chance constraint \eqref{eq:constraint_t} into a tractable and closed form approximation that is amenable to convex optimization techniques. As we will show in Section \ref{ssec:reform}, we can use the one-sided Vysochanskij-Petunin inequality to rewrite our probabilistic constraints into constraints that are affine in the expectation and standard deviation of the half space constraint. In Section \ref{ssec:solve}, we show the reformulation elicits a biconvex constraint and discuss the alternating convex search \cite{Leeuw1994} approach to solving the constraints.

\subsection{The One-sided Vysochanskij–Petunin Inequality} \label{ssec:vp_i}

Here, we state the one-sided Vysochanskij-Petunin inequality for reference.

\begin{thm}[\cite{Mercadier2021}]
Let $\boldsymbol{x}$ be a real valued unimodal random variable with finite expectation $\ex{\boldsymbol{x}}$ and finite, non-zero standard deviation $\std{\boldsymbol{x}}$. Then, for $\lambda > \sqrt{5/3}$, 
\begin{equation} \label{eq:vp}
    \pr{ \boldsymbol{x} - \ex{\boldsymbol{x}}  \geq  \lambda \std{\boldsymbol{x}}} \leq \frac{4}{9(\lambda^2+1)}
\end{equation}
\end{thm}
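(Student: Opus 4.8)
The plan is to reduce to a normalized problem, exploit the mixture structure that unimodality forces via Khinchine's representation, and finish with a one-dimensional moment optimization. Since both sides of \eqref{eq:vp} are invariant under affine rescaling of $\boldsymbol{x}$, I would first translate and scale so that $\ex{\boldsymbol{x}} = 0$ and $\std{\boldsymbol{x}} = 1$; the claim then reads $\pr{\boldsymbol{x} \geq \lambda} \leq \frac{4}{9(\lambda^2+1)}$ for a mean-zero, unit-variance unimodal random variable. As a baseline I would recall the one-sided Chebyshev (Cantelli) argument: for any $s$ with $\lambda + s > 0$ the quadratic majorant $\mathbf{1}_{\{x \geq \lambda\}} \leq (x+s)^2/(\lambda+s)^2$ holds pointwise, and taking expectations then optimizing over $s$ yields the weaker bound $1/(1+\lambda^2)$. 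The work is to sharpen the numerator from $1$ to $4/9$ using unimodality.

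The sharpening comes from Khinchine's representation of unimodal laws: if $\boldsymbol{x}$ is unimodal about its mode $a$, then $\boldsymbol{x} \stackrel{d}{=} a + \boldsymbol{u}\boldsymbol{z}$ where $\boldsymbol{u}$ is uniform on $[0,1]$ and independent of $\boldsymbol{z}$. I would rewrite the two constraints and the objective purely in terms of the law of $\boldsymbol{z}$ and the scalar $a$. Using $\ex{\boldsymbol{u}} = 1/2$ and $\ex{\boldsymbol{u}^2} = 1/3$, the mean constraint becomes $a + \tfrac12\ex{\boldsymbol{z}} = 0$ and the variance constraint becomes $\tfrac13\ex{\boldsymbol{z}^2} - \tfrac14\ex{\boldsymbol{z}}^2 = 1$. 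Conditioning on $\boldsymbol{z} = z$ makes the tail explicit, since $\pr{\boldsymbol{u} z \geq \lambda - a}$ is a simple truncated-linear function of $z$; integrating over the law of $\boldsymbol{z}$ turns the bound into $\ex{\psi(\boldsymbol{z})}$ for an explicit $\psi$.

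This leaves a finite-dimensional moment problem: maximize $\ex{\psi(\boldsymbol{z})}$ over all laws of $\boldsymbol{z}$ (and the coupled parameter $a$) subject to the two moment constraints above. By the standard theory of such problems the extremal law is supported on only a few atoms, so I would posit a two-point (or uniform) extremal configuration, substitute it in, and reduce everything to a single-variable calculus optimization. I expect the maximizer to evaluate to exactly $\frac{4}{9(\lambda^2+1)}$, with the hypothesis $\lambda > \sqrt{5/3}$ appearing precisely as the range in which this interior extremal configuration is feasible, rather than a degenerate boundary case where the mode crosses the threshold $\lambda$.

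The main obstacle is this last optimization step. Carefully identifying the worst-case law of $\boldsymbol{z}$ and the matching offset $a$, and verifying that the candidate is feasible (nonnegativity, the variance budget, and the mode ordering $a \leq \lambda$) and genuinely unimodal, is what pins down the constant $4/9$ rather than a looser value. Getting the threshold $\sqrt{5/3}$ exactly right requires tracking the case split in the moment problem, which is the delicate part; the representation and normalization steps are routine by comparison.
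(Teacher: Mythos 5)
The paper does not prove this statement at all; it is quoted verbatim from the cited reference \cite{Mercadier2021}, so there is no in-paper argument to compare against. Measured against the known proof in that reference, your route is genuinely different: the standard derivation does not use Khinchine's representation directly. Instead it mirrors how Cantelli is obtained from Chebyshev: after normalizing to $\ex{\boldsymbol{x}}=0$, $\std{\boldsymbol{x}}=1$, one writes $\pr{\boldsymbol{x}\geq\lambda}\leq\pr{|\boldsymbol{x}-(-s)|\geq\lambda+s}$ for a shift $s\geq 0$, applies the \emph{two-sided} Vysochanskij--Petunin inequality in its general form centered at the arbitrary point $-s$ (with second moment $1+s^2$ about that point), obtaining $4(1+s^2)/\bigl(9(\lambda+s)^2\bigr)$, and minimizes over $s$; the optimum $s=1/\lambda$ gives exactly $4/\bigl(9(\lambda^2+1)\bigr)$, and the hypothesis $\lambda>\sqrt{5/3}$ is precisely the condition $(\lambda+s)^2\geq\tfrac{8}{3}(1+s^2)$ at that optimal shift, i.e.\ the region where the $4\tau^2/(9r^2)$ branch of the two-sided inequality is valid. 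That argument is short and pushes all the hard unimodality work into the already-established two-sided result.

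Your proposal, by contrast, attempts the hard part from scratch, and as written it has a genuine gap: the entire quantitative content of the theorem --- the constant $4/9$ and the threshold $\sqrt{5/3}$ --- is located in the final moment optimization over the law of $\boldsymbol{z}$, which you do not carry out but only conjecture "evaluates to exactly $4/\bigl(9(\lambda^2+1)\bigr)$." Until that extremal problem is solved, nothing has been proved. Two specific points of concern there: (i) with two active moment constraints (mean and second moment of $\boldsymbol{z}$, coupled to $a$) plus the normalization of total mass, the general theory of moment problems only guarantees an extremal law on at most \emph{three} atoms, not two, so positing a two-point configuration needs justification or an exhaustive case analysis; (ii) the objective $\psi(z)=\bigl(1-(\lambda-a)/z\bigr)^+$ is neither convex nor concave on all of $\R$ (it vanishes for $z\leq\lambda-a$ and is concave beyond), so identifying which atoms are interior versus at the kink is exactly the "delicate case split" you flag, and it is where the value $4/9$ versus Cantelli's $1$ is decided. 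The setup steps (affine invariance, Khinchine representation, $\ex{\boldsymbol{u}}=1/2$, $\ex{\boldsymbol{u}^2}=1/3$, the resulting constraints on $a$ and the law of $\boldsymbol{z}$) are all correct, but they are the routine part; I would recommend either completing the extremal analysis explicitly or switching to the shift-plus-two-sided-VP argument, which delivers the constant and the threshold with far less machinery.
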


The one-sided Vysochanskij-Petunin inequality provides an upper bound on the tail probability of a unimodal random variable's deviation from its mean. The one-sided Vysochanskij-Petunin inequality is a refinement of Cantelli’s inequality for unimodal distributions. The two-sided Vysochanskij–Petunin inequality \cite{Vysochanskij1980} is commonly cited as the foundation for the $3\sigma$ rule in statistics, and the one-sided Vysochanskij-Petunin inequality can be used to show a similar relationship for single-tail bounds.

Here, we will use \eqref{eq:vp} to bound the target set chance constraint probabilities. By moving the expectation to the right side of the inequality, we can bound the probabilities of random variables based on an affine combination of the expectation and standard deviation.

\subsection{Constraint Reformulation} \label{ssec:reform}

We take the complement of the joint chance constraint and employ Boole's inequality to convert the joint chance constraint into a affine combination of individual chance constraints,
\begin{equation}
    \pr{ \bigcup_{k=1}^N \bigcup_{i=1}^{q_k} \vec{G}_{ki} \bvec{x}(k) \geq h_{ki} }
    \leq  \sum_{k=1}^N \sum_{i=1}^{q_k} \pr{\vec{G}_{ki} \bvec{x}(k) \geq h_{ki}}
\end{equation}
We introduce risk allocation variables $\omega_{ki}$ to each of the individual chance constraints \cite{ono2008iterative}, and bound the sum of risk allocation variables,
\begin{subequations}\label{eq:quantile_reform_new_var}
\begin{align}
     \pr{\vec{G}_{ki} \bvec{x}(k) \geq h_{ki}} &\leq \omega_{ki}  \label{eq:quantile_reform_new_var_1_1}\\
     \sum_{k=1}^N \sum_{i=1}^{q_k} \omega_{ki} &\leq \alpha 
\end{align}
\end{subequations}
where $\omega_{ki} \in (0,1)$. 

Here, we need to find a suitable value of $\omega_{ki}$ such that we can solve the constraint. To this end, we impose an additional constraint to \eqref{eq:quantile_reform_new_var} based on the expectation and standard deviation of the random variable $\vec{G}_{ki}\bvec{x}(k)$,
\begin{equation} \label{eq:add_target}
    \ex{\vec{G}_{ki}\bvec{x}(k)} + \lambda_{ki}\std{\vec{G}_{ki} \bvec{x}(k)} \leq  h_{ki}
\end{equation}
with a non-negative optimization parameter $\lambda_{ki}$. Given Assumption \ref{assm:unimodal}, enforcement of \eqref{eq:add_target} guarantees that 
\begin{equation}\label{eq:first_bound}
\begin{split}
    &\pr{\vec{G}_{ki} \bvec{x}(k) \geq h_{ki} } \\
    & \ \leq \pr{\vec{G}_{ki} \bvec{x}(k) \geq \ex{\vec{G}_{ki}\bvec{x}(k)} + \lambda_{ki} \std{\vec{G}_{ki} \bvec{x}(k)}}  \\
     & \ \leq \frac{4}{9(\lambda_{ki}^2+1)}    
\end{split}
\end{equation}
by the one-sided Vysochanskij-Petunin inequality, so long as $\lambda_{ki} > \sqrt{5/3}$. Hence, we can choose
\begin{equation}
    \omega_{ki} = \frac{4}{9(\lambda_{ki}^2+1)}   
\end{equation}
and write \eqref{eq:quantile_reform_new_var}-\eqref{eq:first_bound} as 
\begin{subequations}\label{eq:quantile_reform_new_var_2}
\begin{align}
     \pr{\vec{G}_{ki} \bvec{x}(k) \geq h_{ki}}  &\leq  \frac{4}{9(\lambda_{ki}^2+1)}   \label{subeq:eq:quantile_reform_new_var_2_1} \\
     \ex{\vec{G}_{ki}\bvec{x}(k)} + \lambda_{ki} \std{\vec{G}_{ki} \bvec{x}(k)} & \leq  h_{ki} \label{subeq:eq:quantile_reform_new_var_2_2}  \\
     \sum_{k=1}^N \sum_{i=1}^{q_k}  \frac{4}{9(\lambda_{ki}^2+1)}   &\leq \alpha \label{subeq:eq:quantile_reform_new_var_2_3}\\
     \lambda_{ki} & > \sqrt{5/3} \label{eq:lambda_geq}
\end{align}
\end{subequations}
where \eqref{subeq:eq:quantile_reform_new_var_2_1} is a simplification of \eqref{eq:quantile_reform_new_var_1_1} and \eqref{eq:first_bound}, and \eqref{subeq:eq:quantile_reform_new_var_2_1}-\eqref{subeq:eq:quantile_reform_new_var_2_2} are iterated for all $i$ and $k$. 

Finally, we note that \eqref{subeq:eq:quantile_reform_new_var_2_1} serves only to act as an intermediary between \eqref{subeq:eq:quantile_reform_new_var_2_2} and \eqref{subeq:eq:quantile_reform_new_var_2_3}. Therefore, we can remove \eqref{subeq:eq:quantile_reform_new_var_2_1} as it is redundant. Further, as $\alpha<1/6$ implies $\lambda_{ki}  > \sqrt{5/3}$, \eqref{eq:lambda_geq} will always be met. Hence, \eqref{eq:quantile_reform_new_var_2} simplifies to 
\begin{subequations}\label{eq:quantile_reform_new_var_3}
\begin{align}
     \ex{\vec{G}_{ki}\bvec{x}(k)} + \lambda_{ki} \std{\vec{G}_{ki} \bvec{x}(k)} & \leq  h_{ki}  \label{eq:target_reform}\\
     \sum_{k=1}^N \sum_{i=1}^{q_k} \frac{4}{9(\lambda_{ki}^{2}+1)} &\leq \alpha \label{eq:target_lambda} 
\end{align}
\end{subequations}
for all $i$ and $k$. 

\begin{lem} \label{lem:1}
For the controller $\vec{U}$, if there exists risk allocation variables $\lambda_{ki}$ satisfying \eqref{eq:quantile_reform_new_var_3}, then $\vec{U}$ satisfies \eqref{prob:initial_eq_prob_constraints}.
\end{lem}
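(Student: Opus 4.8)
The plan is to assemble the chain of inequalities already established in Section~\ref{ssec:reform} and read them in the forward direction, since Lemma~\ref{lem:1} is precisely the assertion that feasibility of the surrogate constraints \eqref{eq:quantile_reform_new_var_3} implies satisfaction of the joint chance constraint \eqref{eq:constraint_t}. I would work with the complementary (violation) event and show that its probability is at most $\alpha$, which is equivalent to \eqref{prob:initial_eq_prob_constraints} after taking complements.

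First I would write the violation event as the union $\bigcup_{k,i} \{\vec{G}_{ki}\bvec{x}(k) \geq h_{ki}\}$ and apply Boole's inequality to bound its probability by $\sum_{k,i} \pr{\vec{G}_{ki}\bvec{x}(k) \geq h_{ki}}$. Next, for each pair $(k,i)$ I would invoke constraint \eqref{eq:target_reform}: since it forces $h_{ki} \geq \ex{\vec{G}_{ki}\bvec{x}(k)} + \lambda_{ki}\std{\vec{G}_{ki}\bvec{x}(k)}$, the event $\{\vec{G}_{ki}\bvec{x}(k) \geq h_{ki}\}$ is contained in $\{\vec{G}_{ki}\bvec{x}(k) \geq \ex{\vec{G}_{ki}\bvec{x}(k)} + \lambda_{ki}\std{\vec{G}_{ki}\bvec{x}(k)}\}$, so its probability is no larger; this is the first inequality of \eqref{eq:first_bound}. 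Then, using Assumption~\ref{assm:unimodal} (each $\vec{G}_{ki}\bvec{x}(k)$ is marginally unimodal) together with Assumption~\ref{assm:finite_moment} (finite moments), the one-sided Vysochanskij--Petunin inequality \eqref{eq:vp} bounds each such probability by $\frac{4}{9(\lambda_{ki}^2+1)}$. Summing over $k$ and $i$ and applying \eqref{eq:target_lambda} yields a total violation probability of at most $\alpha$, and taking complements gives \eqref{prob:initial_eq_prob_constraints}.

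The step that requires genuine care --- and the main obstacle --- is verifying that the one-sided Vysochanskij--Petunin inequality is actually applicable, i.e.\ that $\lambda_{ki} > \sqrt{5/3}$ holds for every $(k,i)$, since \eqref{eq:quantile_reform_new_var_3} no longer carries the explicit constraint \eqref{eq:lambda_geq}. I would recover this from the hypothesis $\alpha < 1/6$: because every summand in \eqref{eq:target_lambda} is strictly positive, each individual term obeys $\frac{4}{9(\lambda_{ki}^2+1)} \leq \sum_{k,i}\frac{4}{9(\lambda_{ki}^2+1)} \leq \alpha < 1/6$. Rearranging $\frac{4}{9(\lambda_{ki}^2+1)} < 1/6$ gives $\lambda_{ki}^2 > 5/3$, hence $\lambda_{ki} > \sqrt{5/3}$, which is exactly the admissibility condition for \eqref{eq:vp}. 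With this verified, the remaining steps are routine monotonicity and summation arguments, and the lemma follows.
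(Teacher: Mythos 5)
Your proposal is correct and follows essentially the same route as the paper's proof: De Morgan's law plus Boole's inequality to reduce to individual constraints, monotonicity of the event under \eqref{eq:target_reform}, and the one-sided Vysochanskij--Petunin inequality under Assumptions \ref{assm:finite_moment}--\ref{assm:unimodal}, summed via \eqref{eq:target_lambda}. Your explicit recovery of $\lambda_{ki} > \sqrt{5/3}$ from $\alpha < 1/6$ is a detail the paper relegates to the discussion preceding the lemma rather than the proof itself, and it is a worthwhile inclusion.
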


\begin{proof}
Given Assumption \ref{assm:unimodal}, the one-sided Vysochanskij-Petunin inequality guarantees that if \eqref{eq:target_reform} is satisfied, then \eqref{eq:first_bound} holds. Boole's inequality and De Morgan's law \cite{casella2002} guarantee that if \eqref{eq:target_lambda} holds then \eqref{prob:initial_eq_prob_constraints} is satisfied.
\end{proof}

We formally define the optimization problem that results from this reformulation.
\begin{subequations}\label{prob:big_prob_eq2}
    \begin{align}
        \underset{\vec{U}, \lambda_{ki}}{\mathrm{min}} \quad & J\left(
        \bvec{x}(1), \ldots, \bvec{x}(N),\vec{U}\right)  \\
        \mathrm{s.t. } \quad  & \vec{U} \in \mathcal{U}^N,  \\
        & \text{Expectation and standard deviation} \nonumber \\
        & \text{derived from dynamics } \eqref{eq:dynamics} \text{ with } \bar{x}(0) \label{eq:prob2_dyn}\\
        & \text{Constraint \eqref{eq:quantile_reform_new_var_3} for all $i$ and $k$} \label{eq:prob2_constraint} 
    \end{align}
\end{subequations}
\begin{reform} \label{prob:2}
    Under Assumptions \ref{assm:finite_moment}-\ref{assm:unimodal}, solve the stochastic optimization problem \eqref{prob:big_prob_eq2} with open loop control $\vec{U} \in  \mathcal{U}^N$, optimization parameters $\lambda_{ki}$, and probabilistic violation threshold $\alpha$.
\end{reform}

\begin{lem} \label{lem:2}
Any solution to Reformulation \ref{prob:2} is a conservative solution to Problem \ref{prob:1}.
\end{lem}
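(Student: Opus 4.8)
The plan is to prove the claim by establishing a feasible-set containment between the two problems and then invoking the shared objective. First I would fix the meaning of a ``conservative solution'': a control $\vec{U}$ that is feasible for Problem \ref{prob:1} and whose cost is no smaller than the true optimal cost of Problem \ref{prob:1}. Under this reading the lemma reduces to showing that every control admissible in Reformulation \ref{prob:2} is also admissible in Problem \ref{prob:1}, so that minimizing the same objective over the (possibly strictly) smaller feasible set can only raise the optimal value.

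The first step I would carry out is the feasibility inclusion. Take any solution $(\vec{U}, \{\lambda_{ki}\})$ of Reformulation \ref{prob:2}. By construction this pair satisfies $\vec{U} \in \mathcal{U}^N$, the dynamics \eqref{eq:dynamics}, and constraint \eqref{eq:quantile_reform_new_var_3} for every $i$ and $k$. Lemma \ref{lem:1} then applies verbatim and yields that $\vec{U}$ satisfies the joint chance constraint \eqref{prob:initial_eq_prob_constraints}. Since the admissibility constraint $\vec{U} \in \mathcal{U}^N$ and the dynamics are identical across the two problems, $\vec{U}$ is feasible for Problem \ref{prob:1}; this is the entire analytic content, already delivered by Lemma \ref{lem:1}.

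Next I would compare costs. Both problems minimize the same objective $J(\bvec{x}(1),\ldots,\bvec{x}(N),\vec{U})$, and the auxiliary variables $\lambda_{ki}$ enter neither $J$ nor the chance constraint of Problem \ref{prob:1}. Hence the projection onto $\vec{U}$ of the feasible set of Reformulation \ref{prob:2} is contained in the feasible set of Problem \ref{prob:1}. Minimizing one fixed function over a subset can only produce an optimal value at least as large, so any minimizer $\vec{U}$ of Reformulation \ref{prob:2} is feasible for Problem \ref{prob:1} with cost no smaller than the optimal cost of Problem \ref{prob:1}, which is exactly the definition of conservatism.

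The argument is short, so the main obstacle is not computational but a matter of precision: I must keep the inclusion one-directional and explain why equality fails. Both Boole's inequality and the one-sided Vysochanskij-Petunin inequality are upper bounds, so a control meeting \eqref{prob:initial_eq_prob_constraints} need not admit any $\lambda_{ki}$ satisfying \eqref{eq:quantile_reform_new_var_3}. The containment may therefore be proper, and this gap is precisely the source of the conservatism that the lemma asserts --- the reason the statement claims conservatism rather than equivalence with Problem \ref{prob:1}.
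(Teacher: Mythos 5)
Your proposal is correct and follows essentially the same route as the paper: feasibility of any solution of Reformulation \ref{prob:2} for Problem \ref{prob:1} is delivered by Lemma \ref{lem:1}, and the conservatism is attributed to the non-tightness of Boole's inequality and the one-sided Vysochanskij--Petunin inequality. Your version merely spells out the feasible-set containment and cost comparison more explicitly than the paper does, which is a matter of presentation rather than a different argument.
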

\begin{proof}
By Lemma \ref{lem:1}, \eqref{eq:prob2_dyn}-\eqref{eq:prob2_constraint} satisfy \eqref{prob:initial_eq_prob_constraints}. Here, \eqref{eq:prob2_dyn} replaces \eqref{eq:dynamics} as we only need the expectation and standard deviation derived from the dynamics. All other elements remain unchanged. Conservatism is introduced from Boole's inequality as equality is only achieved when constraints are independent. Similarly, the one-sided Vysochanskij-Petunin inequality never achieves equality and will introduce conservatism.
\end{proof}

Here, Reformulation \ref{prob:2} is a conservative but tractable reformulation of Problem \ref{prob:1}. While we cannot guarantee a solution exists to Reformulation \ref{prob:2}, we can guarantee that if a solution exists to Reformulation \ref{prob:2} it is also a solution to Problem \ref{prob:1}.

\begin{figure*}[t]
\normalsize
\newcounter{mytempeqncnt}
\setcounter{mytempeqncnt}{\value{equation}}
\setcounter{equation}{18}
\begin{equation} \label{eq:target_biconvex}
       \underbrace{\vec{G}_{ki} A^k \vec{x}(0)  + \vec{G}_{ki}\mathcal{A}(k) \ex{\boldsymbol{\mathcal{C}}}\vec{U}}_{\ex{\vec{G}_{ki} \bvec{x}(k)}} + \lambda_{ki} \underbrace{\left\| \var{\mathrm{vec}(\boldsymbol{\mathcal{C}})}^{1/2} (\vec{U} \otimes I_{nN} ) \mathcal{A}^{\top}(k) \vec{G}_{ki} ^{\top} \right\|}_{\std{\vec{G}_{ki} \bvec{x}(k)}} \leq  h_{ki} 
\end{equation}
\setcounter{equation}{\value{mytempeqncnt}}
\hrulefill
\end{figure*}

\subsection{Solving Reformulation \ref{prob:2}} \label{ssec:solve}

A biconvex problem has the following form:
\begin{subequations} \label{eq:biconvex}
\begin{align}
    \min_{x,y} & \; f(x,y) \\
    \mathrm{s.t.} & \; g_i(x,y) \leq 0 \quad \forall i \in \N 
\end{align}
\end{subequations}
where $x \in X \subseteq \R^n$, $y \in Y \subseteq \R^m$, $f(\cdot, \cdot): \R^n \times \R^m \rightarrow \R$ and $g_i(\cdot, \cdot): \R^n \times \R^m \rightarrow \R$ are convex when optimizing over one parameter while holding the other constant \cite{Gorski2007}. Next, we show that \eqref{eq:quantile_reform_new_var_3} elicits a biconvex constraint. We first find the closed form for \eqref{eq:quantile_reform_new_var_3}. The affine form of \eqref{eq:lin_dynamics} allows us to easily compute the expectation for an individual constraint via the linearity of the expectation operator, 
\begin{equation} \label{eq:exp_at_k}
\begin{split}
    \ex{\vec{G}_{ki}\bvec{x}(k)} = & \; \vec{G}_{ki} A^k \vec{x}(0)  + \vec{G}_{ki}\mathcal{A}(k) \ex{\boldsymbol{\mathcal{C}}}\vec{U}
\end{split}
\end{equation}
where $\ex{\boldsymbol{\mathcal{C}}}$ is the matrix consisting of the expectation of each element of $\boldsymbol{\mathcal{C}}$. Similarly, we can find the variance as
\begin{subequations} \label{eq:var_at_k}
\begin{align}
    & \var{\vec{G}_{ki} \bvec{x}(k)}  \\
    & \ = \left\| \var{\boldsymbol{\mathcal{C}}\vec{U}}^{1/2} \mathcal{A}^{\top}(k) \vec{G}_{ki} ^{\top} \right\|^2 \\
    & \ = \left\| \var{(\vec{U}^{\top} \otimes I_{nN} )\mathrm{vec}(\boldsymbol{\mathcal{C}})}^{1/2} \mathcal{A}^{\top}(k) \vec{G}_{ki} ^{\top} \right\|^2 \\
    & \ = \left\| \var{\mathrm{vec}(\boldsymbol{\mathcal{C}})}^{1/2} (\vec{U} \otimes I_{nN} ) \mathcal{A}^{\top}(k) \vec{G}_{ki} ^{\top} \right\|^2
\end{align}
\end{subequations}
Hence, we can write \eqref{eq:target_reform} as \eqref{eq:target_biconvex}
\stepcounter{equation}

As \eqref{eq:exp_at_k} is affine in the input, it is also convex. Further, \eqref{eq:var_at_k} allows us to write $\std{\vec{G}_{ki} \bvec{x}(k)}$ as a 2-norm and hence, it is also convex. As $\lambda_{ki}$ is a linear optimization parameter, $\lambda_{ki}\std{\vec{G}_{ki} \bvec{x}(k)}$ is biconvex. 

For completeness, we include Algorithm \ref{algo:acs} to demonstrate one method of solving \eqref{eq:target_biconvex} via the well-known alternate convex search method \cite{Leeuw1994}. While this method can only guarantee a locally optimal solution, Lemma \ref{lem:2} guarantees any solution will be a feasible solution to Problem \ref{prob:1}. 

\begin{algorithm}
  \caption{Computing solutions to \eqref{eq:biconvex} with alternate convex search}
	\label{algo:acs}
	\textbf{Input}: Feasible initial condition for $y$, denoted $y^\ast$, maximum number of iterations $n_{max}$.
	\\
	\textbf{Output}: Solution to \eqref{eq:biconvex}, $(x^\ast, y^\ast)$ 
	\begin{algorithmic}[1]
	\For{$i = 1 $ to $n_{max}$}
    \State Solve \eqref{eq:biconvex} assuming $y = y^\ast$; Set $x^\ast = x$
    \State Solve \eqref{eq:biconvex} assuming $x = x^\ast$; Set $y^\ast = y$
    \If{Solutions converged}
    \State \textbf{break}
    \EndIf
    \EndFor
  \end{algorithmic}
\end{algorithm}

\section{Results} \label{sec:results}

We demonstrate our method on a satellite rendezvous and docking problem with two different disturbances present in the control matrix. All computations were done on a 1.80GHz i7 processor with 16GB of RAM, using MATLAB, CVX \cite{cvx} and Gurobi \cite{gurobi}. All code is available at \url{https://github.com/unm-hscl/shawnpriore-random-control}. 

We consider the rendezvous of two satellites, referred to as the deputy and chief. The deputy spacecraft must remain in a predefined line-of-sight cone, and reach a target set that describes docking at the final time step. The relative dynamics are modeled via the Clohessy–Wiltshire equations \cite{wiesel1989_spaceflight}
\begin{subequations}
\begin{align}
\ddot x - 3 \omega^2 x - 2 \omega \dot y &= \frac{F_x}{m_c} \label{eq:cwh:a}\\
\ddot y + 2 \omega \dot x & = \frac{F_y}{m_c} \label{eq:cwh:b}\\
\ddot z + \omega^2 z & = \frac{F_z}{m_c}. \label{eq:cwh:c}
\end{align}   
\label{eq:cwh}
\end{subequations}
with input $\vec{u} = [ \begin{array}{ccc} F_x & F_y & F_z\end{array}]^\top$, orbital rate $\omega = \sqrt{\frac{\mu}{R^3_0}}$, gravitational constant $\mu$, orbital radius $R_0 = 42,164$km, and spacecraft mass $m_c=1$kg. We discretize \eqref{eq:cwh} under the assumption of impulse control with sampling time $60$s so that dynamics of the deputy are described by  
\begin{equation}
    \bvec{x}(k+1) = A \bvec{x}(k) + B \vec{u}(k) \label{eq:cwh_lin}
\end{equation}
with admissible input set $\mathcal{U} = [-0.1,0.1]^3$, and time horizon $N=5$, corresponding to 5 minutes of operation. 

The line-of-sight cone for time steps 1-4 is defined by
\begin{equation}
    G_k = \begin{bmatrix}
        -1 & 0 & 1 & 0 & 0 & 0 \\
        -1 & 1 & 0 & 0 & 0 & 0 \\
        -1 & 0 & -1 & 0 & 0 & 0 \\
        -1 & -1 & 0 & 0 & 0 & 0 \\
        1 & 0 & 0 & 0 & 0 & 0 
    \end{bmatrix} \; 
    \vec{h}_k = \begin{bmatrix}
        0 \\ 0 \\ 0 \\ 0 \\ 10
    \end{bmatrix}
\end{equation}
The terminal set is defined by 
\begin{equation}
    G_N = I_6 \otimes \begin{bmatrix}
        1 \\ -1
    \end{bmatrix} \; 
    \vec{h}_N = \begin{bmatrix}
        2 \\ 0 \\ 0.5 \cdot \vec{1}_{4} \\ 0.1 \cdot \vec{1}_{6}
    \end{bmatrix}
\end{equation}
We graphically represent the problem of interest in Figure \ref{fig:problem}. The probabilistic violation threshold $\alpha$ is set to 0.15 such that
\begin{equation}
        \pr{\bigcap_{k=1}^N  G_{k} \bvec{x}(k) \leq \vec{h}_{k} }  \geq  0.85
\end{equation}
The performance objective is based on fuel consumption,
\begin{equation}
    J\left(\bvec{x}(1), \ldots, \bvec{x}(N),\vec{U}\right) = \vec{U}^\top \vec{U}
\end{equation}
To solve this problem, we utilize the alternate convex search technique outlined in Algorithm \ref{algo:acs}.

\begin{figure}
    \centering
    \includegraphics[width=0.6\columnwidth]{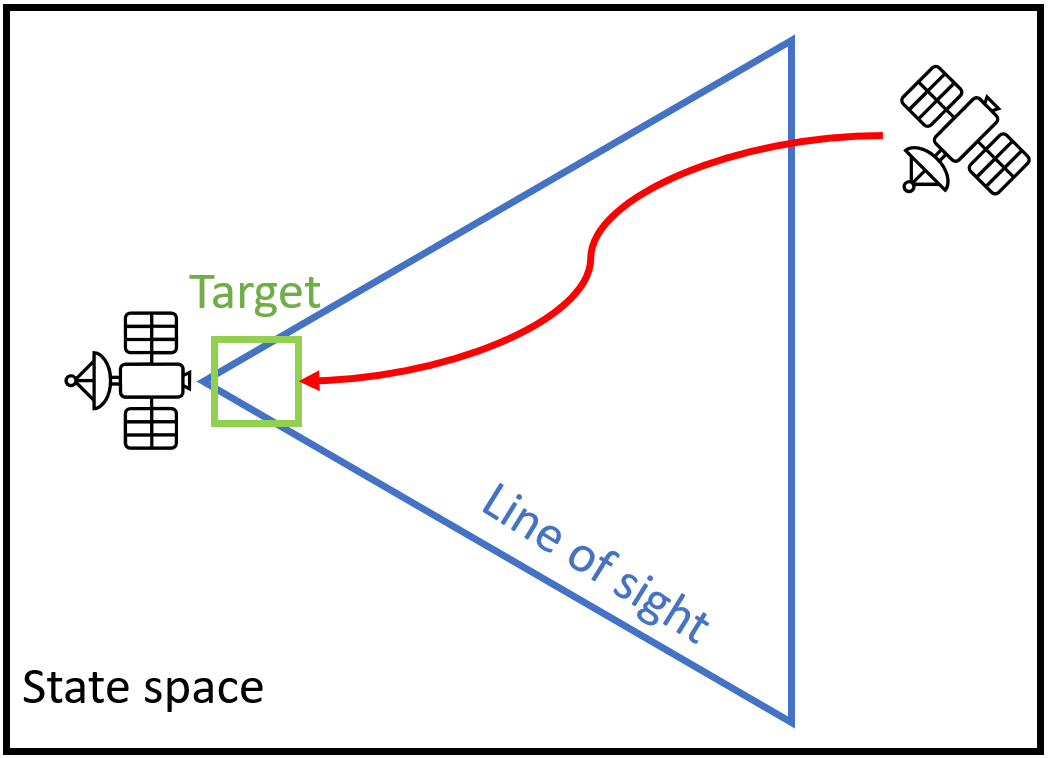}
    \caption{Graphic representation of the problem posed in Section \ref{sec:results}. Here, the dynamics of the deputy has stochasticity in the control matrix representing under-performing actuators and inaccuracies from impulse control assumptions. We attempt to find a control sequence that allows the deputy to rendezvous with the chief while meeting probabilistic time varying target set requirements. }
    \label{fig:problem}
\end{figure}

\subsection{Impulse Control Inaccuracies}

In this section, we consider the problem of inaccuracies caused by impulse control assumptions. In many engineering applications, the actuators will have to either ramp up or down to the desired system input. In either case, the vehicle will have received an incorrect level of input for some duration of the time step causing inaccuracies in the vehicles position. Here, we model the inaccuracies by multiplying each column by a random variable 
$$\Gamma_{ik} \overset{iid}{\sim} Gamma(10^{3},10^{-3})$$
Note that only the rows that correspond to positional elements of the state will be multiplied by a random variable. We write $\boldsymbol{B}(k)$ as
\begin{equation}
    \boldsymbol{B}(k) = \begin{bmatrix}
        59.9998 \Gamma_{1k} & 0.2619 \Gamma_{2k} & 0\\ 
        -0.2619 \Gamma_{1k} & 59.9992 \Gamma_{2k} & 0\\ 
        0 & 0 & 59.9998 \Gamma_{3k}\\ 
         1 & 0.0087 & 0\\ 
        -0.0087&  1& 0\\ 
        0 & 0 & 1
    \end{bmatrix}
\end{equation}
Here, $\Gamma_{ik}$ are independent. From the properties of the gamma distribution,
\begin{equation}
        \ex{\Gamma_{ik}} =  1 \quad
        \var{\Gamma_{ik}} =  10^{-3}
\end{equation}
which satisfies Assumption \ref{assm:finite_moment}. Since the $Gamma(10^{3},10^{-3})$ distribution is strong unimodal as per Definition \ref{defn:strong_unimodal}, Assumption  \ref{assm:unimodal} is also satisfied.

We compare the use of the one-sided Vysochanskij-Petunin inequality with the more broadly applicable Cantilli's inequality \cite{farina2016stochastic}. Using Cantilli's inequality does not require the constraints be unimodal and thus is a simpler method to use. However, for unimodal distributions, Cantilli's inequality is more conservative than the one-sided Vysochanskij-Petunin inequality. We see this in the resulting trajectory and solution cost, as presented in Figure \ref{fig:gamma} and Table \ref{tab:gamma}. The one-sided Vysochanskij-Petunin inequality resulted in a solution cost that was approximately $20\%$ less than that of the method using Cantelli's inequality. This difference in cost justifies the added burden of needing to verify constraint unimodality. 

To assess constraint satisfaction, we generated $10^5$ Monte Carlo sample disturbances for each approach. Table \ref{tab:gamma} shows that both methods satisfied the constraint for each sample taken. We expected both methods to be conservative, as neither the one-sided Vysochanskij-Petunin inequality nor Cantelli's inequality are tight bounds. We can also compare the relative conservativeness of the two methods in Figure \ref{fig:gamma}. As shown in the $X-Z$ plot on the right, the proposed method allows for the nominal trajectory to be closer to the edge of each hyperplane constraint. 

\begin{figure}
    \centering
    \includegraphics[width=0.9\columnwidth]{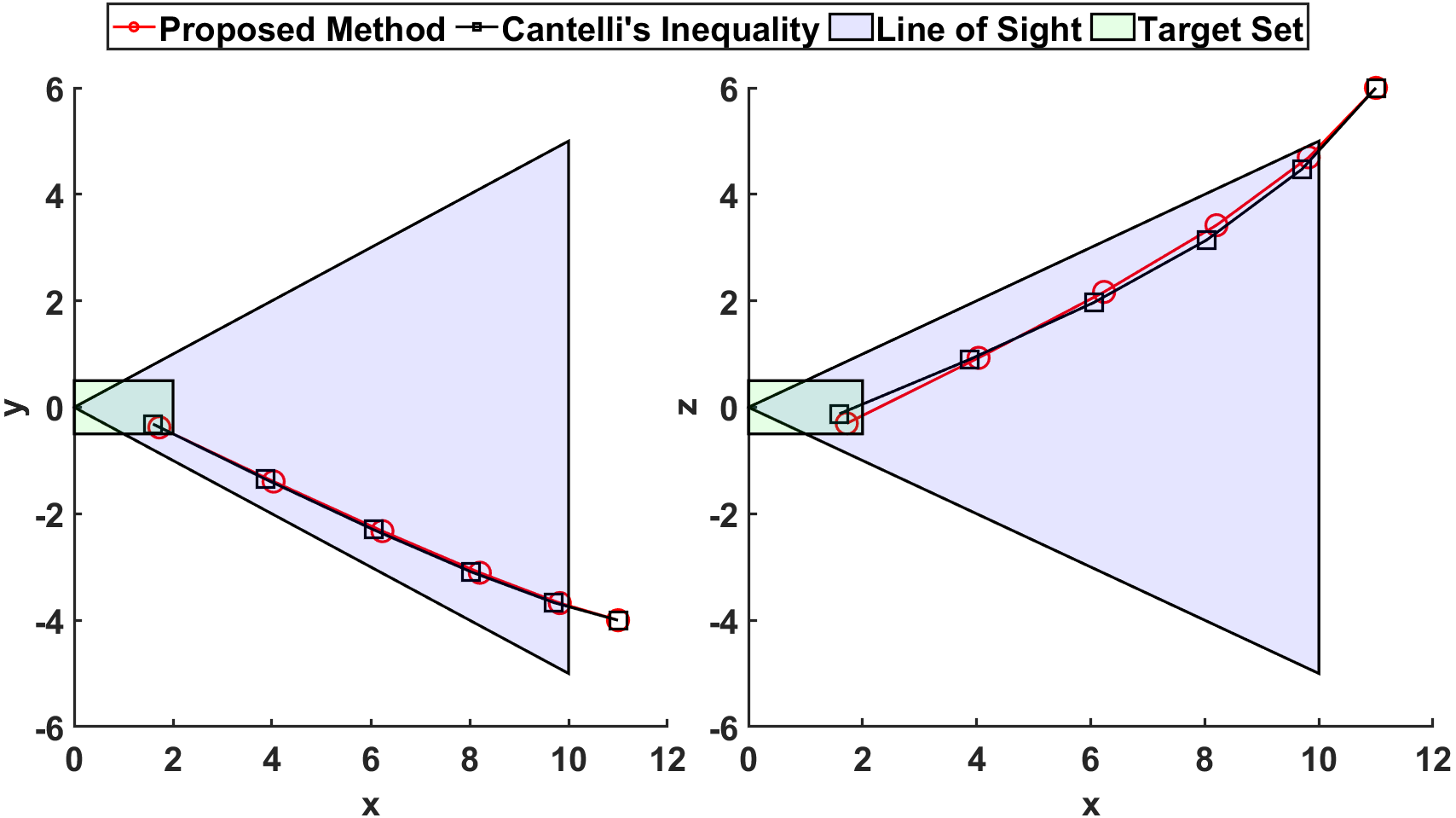}
    \caption{Comparison of expected trajectories between proposed method with the one-sided Vysochanskij–Petunin inequality (red line) and Cantelli's inequality (black line). Trajectories based on CWH dynamics with gamma distributed random elements in the control matrix.}
    \label{fig:gamma}
\end{figure}

\begin{table}
    \caption{Comparison of Solution and Computation Time for CWH Dynamics with Gamma Random Elements in the Control Matrix. Constraint Satisfaction was Measured by Proportion of $10^5$ Monte Carlo Samples that Satisfied the Constraint.}
    \centering
    \begin{tabular}{lcc}
         \hline \hline
         Metric &  Proposed Method & Cantelli's Inequality \cite{farina2016stochastic}\\
         \hline 
         Solve Time & 1.8740 s  & 2.1188  s\\ 
         Iterations &  2  & 2 \\ 
         Solution Cost & $1.030 \times 10^{-3}$ &   $1.282\times 10^{-3}$ \\ 
         Constraint Satisfaction  & 1.000 & 1.000 \\ \hline
    \end{tabular}
    \label{tab:gamma}
\end{table}

\subsection{Under-performing Actuators}

In this section, we consider the problem of under-performing actuators. In this scenario, we assume the impulse control assumption is valid, but the actuators do not reach the level of output as prescribed by the control input. Here, we model the under-performance by multiplying each column of the control matrix by a random variable 
$$\beta_{ik} \overset{iid}{\sim} Beta(152,8)$$
We write $\boldsymbol{B}(k)$ as
\begin{equation}
    \boldsymbol{B}(k) = \begin{bmatrix}
        59.9998 \beta_{1k} & 0.2619 \beta_{2k} & 0\\ 
        -0.2619 \beta_{1k} & 59.9992 \beta_{2k} & 0\\ 
        0 & 0 & 59.9998 \beta_{3k}\\ 
         \beta_{1k} & 0.0087 \beta_{2k}& 0\\ 
        -0.0087 \beta_{1k} &  \beta_{2k}& 0\\ 
        0 & 0 & \beta_{3k}
    \end{bmatrix}
\end{equation}
From the properties of the beta distribution, we know that
\begin{equation}
        \ex{\beta_{ik}} = 0.95 \quad
        \var{\beta_{ik}} =  2.95031 \times 10^{-4}
\end{equation}
and we can satisfy unimodality as the $Beta(152,8)$ distribution is strong unimodal as per Definition \ref{defn:strong_unimodal}. Hence, we know that both Assumptions \ref{assm:finite_moment} and \ref{assm:unimodal} have been satisfied.

We compare the proposed method with the scenario approach \cite{Campi2008}. To compute the number of samples needed to employ the scenario approach we use the formula $N_s \geq \frac{2}{\alpha}(\ln{\frac{1}{\delta}} + mN)$, where $\delta$ is a predefined confidence parameter. To allow for comparable results, we consider $\delta = 10^{-8}$, resulting in $N_s =446$ samples. We plot the expectation of the trajectories in Figure \ref{fig:beta}. We see that the two mean trajectories are nearly identical. 

Solution statistics and constraint satisfaction can be found in Table \ref{tab:beta}. To assess constraint satisfaction, we generated $10^5$ Monte Carlo sample disturbances for each approach. We see that while both methods satisfied the constraint, both were also conservative with respect to the safety threshold. We expected both methods to be conservative, as the one-sided Vysochanskij-Petunin inequality results in a loose bound and the scenario approach relies on samples without regard for the likelihood of the samples taken. However, while the solution cost and conservativeness of the two methods are similar, the solution was computed in almost half the time with the proposed method.

\begin{figure}
    \centering
    \includegraphics[width=0.9\columnwidth]{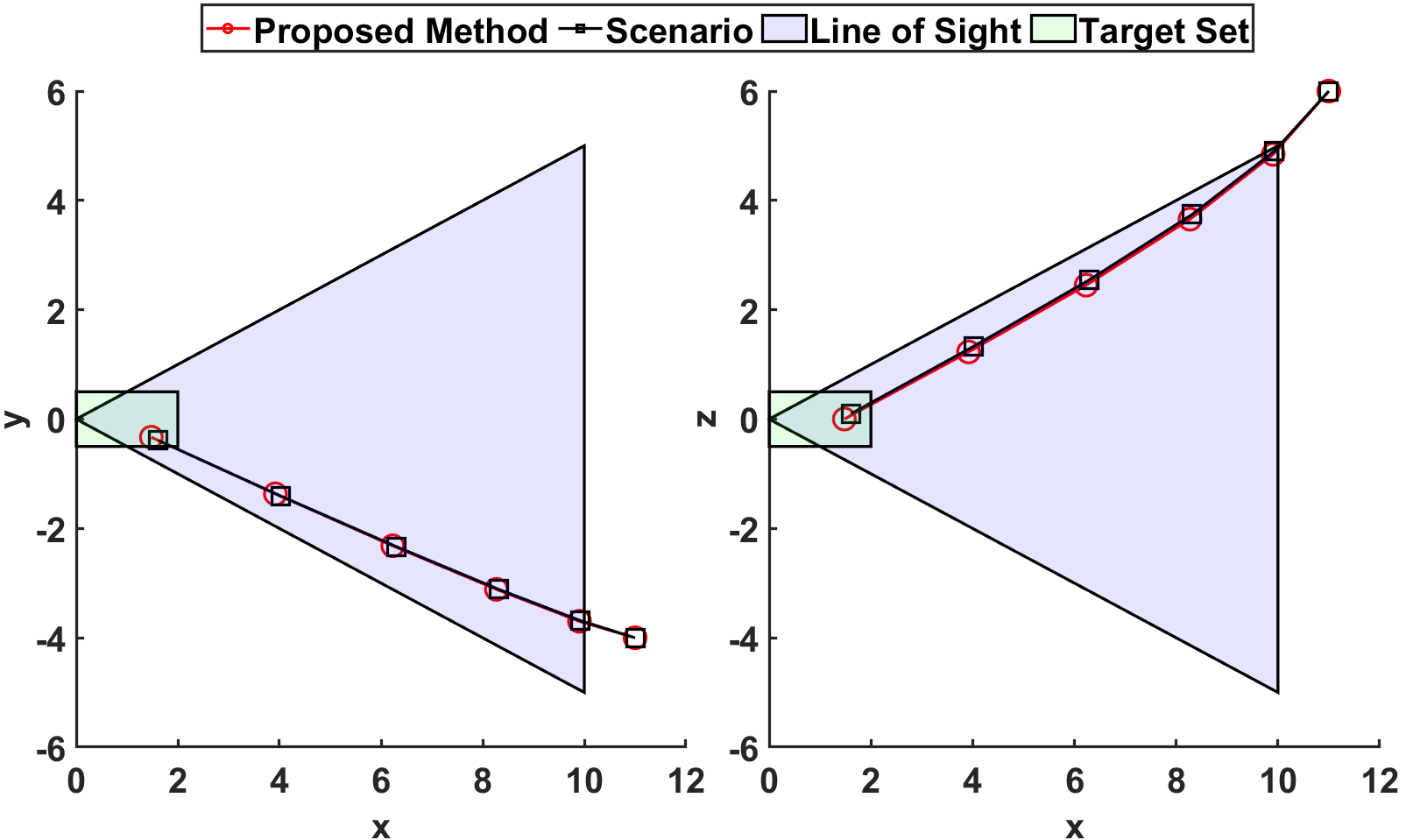}
    \caption{Comparison of expected trajectories between proposed method (red line) and scenario approach (black line). Trajectories based on CWH dynamics with beta distributed random elements in the control matrix. Notice the trajectories are nearly identical.}
    \label{fig:beta}
\end{figure}

\begin{table}
    \caption{Comparison of Solution and Computation Time for CWH Dynamics with Beta Random Elements in the Control Matrix. Constraint Satisfaction was Measured by Proportion of $10^5$ Monte Carlo Samples that Satisfied the Constraint.}
    \centering
    \begin{tabular}{lcc}
         \hline \hline
         Metric &  Proposed Method & Scenario Approach \\
         \hline 
         Solve Time & 22.2034 s  & 39.6245 s\\ 
         Iterations &  2  & N/A \\ 
         Solution Cost & $1.024 \times 10^{-3}$ &   $0.969\times 10^{-3}$ \\ 
         Constraint Satisfaction  & 0.9996 & 0.9893 \\ \hline
    \end{tabular}
    \label{tab:beta}
\end{table}

\section{Conclusions} \label{sec:conclusion}

We proposed a framework for solving stochastic optimal control problems for systems with random elements in the control matrix, subject to polytopic target set chance constraints. This framework is intended to help account for modeling inaccuracies in the control. Our approach relies on the one-sided Vysochanskij–Petunin inequality and Boole's inequality to reformulate the joint chance constraints into a series of individual biconvex constraints. We outlined the alternate convex search approach to solve the biconvex constraints. We demonstrated our method on two satellite rendezvous scenarios with inaccuracies resulting from impulse control assumptions and under-performing actuators. We compared our methods with the scenario approach and expectation methods reliant on Cantelli's inequality. We showed that our method resulted in a lower solution cost in comparison to Cantelli's inequality and shorter computation time in comparison to the scenario approach. 

\bibliography{main}
\bibliographystyle{ieeetr}
\end{document}